\newtheorem{defn}{Definition} 
\newtheorem{df}{Definition}
\newtheorem{thm}{Theorem}
\newtheorem{lem}[defn]{Lemma}
\newtheorem{cor}{Corollary}
\newtheorem {rem}{Remark}
\newtheorem{con}{Conjecture}
\title{The algebra of row monomial matrices}
\date{2.4.2022}
\author{A.N. Trahtman\thanks{Email: avraham.trakhtman@gmail.com}
\institute{{Bar-Ilan University, Dep. of Math., 52900, Ramat Gan, Israel}}
}
\begin{document}

\maketitle

 \begin{abstract}

  A word $w$ of letters on edges of underlying graph $\Gamma$ of deterministic finite 
automaton (DFA) is called synchronizing if $w$ sends all states of the automaton to a unique state ($|R(w)|=1$.
 J. \v{C}erny discovered in 1964 a sequence of $n$-state complete DFA
possessing a minimal synchronizing word of length $(n-1)^2$.

The hypothesis, well known today as the \v{C}erny conjecture, claims that  $(n-1)^2$ 
is also precise  upper bound on the length of such a word for a complete DFA. 
The hypothesis was formulated in 1966 by Starke.
The problem has motivated great and constantly growing number of investigations and generalizations.

The class of row monomial matrices (one unit and rest zeros in every row) with some 
 non-standard operations of summation and usual multiplication is our main object. 
These matrices generate a space with respect to the mentioned operations.
Every row monomial matrix is defined by word of path on DFA. 

The proof of the conjecture is based on connection between length of words  
of row monomial matrices and dimension of the space generated by row monomial  matrices 
of prefixes of synchronizing word.  
 
\end{abstract}

\section*{Introduction}

The long and fascinating history of state machine synchronization and the problems around 
was reflected in hundreds articles.
 The problem of synchronization of finite automata is a natural one and various aspects
of this problem have been touched in the literature.
Different problems of synchronization and achievements one can find in surveys \cite{Ju}, \cite{KV}
and works \cite{B}, \cite{MS}, \cite{Tm}.

The synchronizing word limits the propagation of errors for a prefix code.
Deterministic finite automaton is a tool that helps to recognized language in a set of DNA strings.

A problem with a long story is the estimation of the minimal length of synchronizing word.
 J. \v{C}erny in 1964 \cite{Ce} found the infinite sequence of $n$-state complete DFA with shortest
synchronizing word of length $(n-1)^2$ for an alphabet of size two.
Since then, only 27 small automata of length $(n-1)^2$ for $n \leq 6$ have been added
to the single \v{C}erny sequence \cite{Tm}.

  The hypothesis, well known today as the \v{C}erny's conjecture, claims that this lower bound
on the length of the synchronizing word of aforementioned automaton is also the upper bound
for the length of synchronizing word of any $n$-state complete DFA.

\begin{con}
The deterministic complete $n$-state synchronizing automaton over alphabet $\Sigma$
has synchronizing word in $\Sigma$ of length at most $(n-1)^2$ \cite{Sta} (Starke, 1966).
  \end{con}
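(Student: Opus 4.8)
The plan is to convert the bound on the length of a synchronizing word into a bound on the dimension of the space generated by the row monomial matrices of its prefixes, exactly as announced in the abstract. Fix an $n$-state complete DFA and a \emph{shortest} synchronizing word $w = a_1 a_2 \cdots a_\ell$; write $w_i = a_1 \cdots a_i$ for its prefixes, with $w_0$ the empty word, and let $M_i$ be the row monomial matrix of $w_i$ (the $0$--$1$ matrix with a unit in position $(p,q)$ iff $w_i$ sends state $p$ to $q$). Then $M_0$ is the identity and every row of $M_\ell$ is the same vector, which is precisely the statement $|R(w_\ell)| = 1$. The first preparatory step is a minimality lemma: if $R(w_i) = R(w_j)$ for $i < j$, then since $w = w_j\,(a_{j+1}\cdots a_\ell)$ one gets $R\bigl(w_i\,(a_{j+1}\cdots a_\ell)\bigr) = R(w)$, a strictly shorter synchronizing word, a contradiction; hence all images $R(w_0), \ldots, R(w_\ell)$ — and a fortiori all matrices $M_i$ — are pairwise distinct. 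A companion cheap reduction: a prefix acting as a permutation of the states may be deleted without destroying synchronization, so one may assume $w$ has no ``permutation prefix''; this is what excludes the degenerate cases in the independence argument below.

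The heart of the argument is the passage to the $(n-1)$-dimensional picture. The subspace $V_0 = \{\,v : \sum_q v_q = 0\,\}$ of row vectors with zero coordinate sum is invariant under right multiplication by every row monomial matrix, so restricting $M_i$ to $V_0$ yields $(n-1)\times(n-1)$ matrices $N_i$ with $N_0 = \mathrm{id}_{V_0}$ and $N_\ell = 0$ (a synchronizing word annihilates $V_0$). Equipped with the non-standard summation on row monomial matrices, these $N_i$ live in a linear space whose dimension I claim to be exactly $(n-1)^2$. The decisive claim is then:
\begin{center}
\emph{for a shortest synchronizing word, $N_0, N_1, \ldots, N_{\ell-1}$ are linearly independent in that space.}
\end{center}
Granting this, the dimension count gives $\ell \le (n-1)^2$ at once, and the \v{C}erny sequence shows the bound is attained; note that $N_\ell = 0$ is exactly what forbids one more independent matrix, so the count is ``tight'' in the right way. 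To prove the claim I would assume a shortest dependence $\sum_{i\le m} c_i N_i = 0$ with $c_m \ne 0$ and $m \le \ell-1$, multiply on the left by the letter matrices and use $N_{i+1} = N_i A_{i+1}$: the vanishing of the low-index coefficients forces the corresponding initial letters to act invertibly on $V_0$, i.e.\ as permutations of $Q$, which the ``no permutation prefix'' reduction rules out; peeling the relation from the bottom in this way eventually collapses it to an identity of prefix images $R(w_i) = R(w_j)$, contradicting the minimality lemma. The reason ordinary matrix addition must be replaced by the non-standard one is precisely to keep the rows of a sum ``concentrated'' enough that a linear dependence among the $N_i$ can be reinterpreted as such a coincidence of images.

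The step I expect to be the genuine obstacle is this linear-independence claim, and inside it two delicate points. First, one must verify that the non-standard operations really turn the row monomial matrices into a bona fide linear space — associativity and distributivity of the modified summation, so that ``dimension'' is well defined — and that this dimension is \emph{exactly} $(n-1)^2$ rather than merely $O(n^2)$ with a worse constant; the slack between the naive count $n(n-1)$ and the target $(n-1)^2$ is where every crude argument fails. Second, one must turn a hypothetical dependence $\sum c_i N_i = 0$ into an honest equality of prefix images, which needs a careful case split according to whether the relevant letters permute $Q$ or strictly decrease $|R(\cdot)|$, possibly combined with induction on $n$. Previous assaults on the conjecture founder at one of these two places, so the bulk of the work must be devoted to making both of them airtight.
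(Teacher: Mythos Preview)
Your route diverges from the paper's in a structural way. You restrict the prefix matrices $M_i$ to the zero-sum hyperplane $V_0$ and try to show that the resulting $(n-1)\times(n-1)$ operators $N_0,\ldots,N_{\ell-1}$ are linearly independent. The paper never does this restriction; instead it works with auxiliary row monomial $n\times n$ matrices $L_i$ that solve $M_{u_i}L_i=M_s$ (Lemma~\ref{l5}), exploits the freedom in choosing the ``non-forced'' units of each $L_i$ to place them in distinct cells of the first $n-1$ columns, and appeals to the basis count $n(k-1)+1$ of Lemma~\ref{v3} with $k=n-1$. So the paper's independent family is \emph{constructed by free choice}, not extracted from the prefixes themselves; the link back to word length is supposed to come from the inclusion chain of subspaces (Corollary~\ref{cs}), not from independence of the $N_i$.

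Your own argument has a real gap at exactly the point you flag, and the sketch you give does not close it. From a hypothetical relation $\sum_{i\le m} c_i N_i=0$ you propose to ``multiply on the left by the letter matrices and use $N_{i+1}=N_iA_{i+1}$''. But that recursion is right multiplication, and right-multiplying the relation by $A_{m+1}$ sends $N_m$ to $N_{m+1}$ while sending each other $N_i$ to $N_iA_{m+1}$, which is \emph{not} $N_{i+1}$; there is no way to peel indices. Nor does vanishing of the low-index coefficients say anything about invertibility of early letters: $c_0=0$ is a numerical condition on the dependence, not a structural condition on $a_1$. Finally, even granting some manipulation, a linear dependence among the $N_i$ does not force an equality $R(w_i)=R(w_j)$; two row monomial matrices can have the same column support and still be linearly independent, and conversely a dependence can hold among matrices with pairwise distinct supports. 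The passage from ``$\sum c_iN_i=0$'' to ``two prefix images coincide'' is the entire conjecture, and nothing in the sketch bridges it.

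A smaller but related issue: once you restrict to $V_0$, the operators $N_i$ are no longer row monomial, so the paper's ``non-standard summation'' and the dimension bound of Lemma~\ref{v3} do not apply to them. The $(n-1)^2$ you want would have to come from ordinary linear algebra on $\mathrm{End}(V_0)$, where the full space already has dimension $(n-1)^2$, so independence of $\ell$ nonzero operators there gives only $\ell\le(n-1)^2$ \emph{if} they are independent---which is again the unproved claim.
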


The problem can be reduced to automata with a strongly connected graph \cite{Ce}.

We consider a class of matrices $M_u$ of mapping induced by words $u$
in the alphabet of letters on edges of the underlying graph $\Gamma$.
The matrix $M_u$ of word $u$  belongs to the class of matrices with
one unit in every row and rest zeros  (row monomial).

Initially found upper bound for the minimal length of synchronizing word was big and
has been consistently improved over the years by different authors.
The upper bound found by Frankl in 1982 \cite{Fr} is equal to $(n^3-n)/6$.
The result was reformulated in terms of synchronization in \cite{Pin}
and repeated independently in \cite{KRS}.
The cubic estimation of the bound exists since 1982.

An attempt to prove the \v{C}erny conjecture is proposed below.

\section*{Preliminaries}
We consider a complete $n$-state DFA with
 strongly connected underlying graph $\Gamma$
 over a fixed finite alphabet $\Sigma$ of labels on edges of  $\Gamma$ of an automaton $A$.
The trivial cases $n \leq 2$, $|\Sigma|=1$ and $|A \sigma|=1$ for
$\sigma \in \Sigma$ are excluded.

The restriction on strongly connected graphs is based on \cite{Ce}.
The states of the automaton $A$ are considered also as vertices of the graph $\Gamma$.

If there exists a path in an automaton from the state $\bf p$ to
the state $\bf q$ and the edges of the path are consecutively
labelled by $\sigma_1, ..., \sigma_k$, then for
$s=\sigma_1...\sigma_k \in \Sigma^+$ let us write ${\bf q}={\bf p}s$.

Let $Px$ be the set of states ${\bf q}={\bf p}x$ for all ${\bf p}$
from the subset $P$ of states and $x \in \Sigma^+$.
Let $Ax$ denote the set $Px$ for the set $P$ of all states of the automaton.

 A word $s \in \Sigma^+ $ is called a {\it synchronizing (reset, magic, recurrent, homing)}
 word of an automaton $A$ with underlying graph $\Gamma$ if $|As|=1$.
The word $s$ below denotes minimal synchronizing word such that for a state $\bf q$ $As=\bf q$.

The states of the automaton are enumerated, the state $\bf q$  has number one.

 An automaton (and its underlying graph) possessing a synchronizing word is called {\it synchronizing}.

Let us consider a linear space generated by row monomial (one unit and rest of zeros in every row)
$n \times n$-matrices.

We connect a mapping of the set of states of the automaton made by
a word $u$ of  $n \times n$-matrix $M_u$ such that for an element $m_{i,j} \in M_u$ takes place

\centerline{$m_{i,j}$= $\cases{1, &${\bf q}_i u ={\bf q}_j$; \cr 0, &otherwise.}$}

Any mapping of the set of states of the automaton  $A$  can be presented by some  word $u$
and by a corresponding row monomial matrix $M_u$.
For instance,

 \centerline{$M_u = \left(
\begin{array}{ccccccc}
  0 & 0 & 1 & . & . & . &  0 \\
  1 & 0 & 0 & . & . & . &  0 \\
  0 & 0 & 0 & . & . & . &  1 \\
  . & . & . & . & . & . &  . \\
  0 & 1 & 0 & . & . & . &  0 \\
  1 & 0 & 0 & . & . & . &  0 \\
\end{array}\right)
$}

 Let us call the matrix $M_u$ of the mapping induced by the word $u$, for brevity, the matrix of word $u$.

$M_uM_v=M_{uv}$ \cite{Be}.

The set of nonzero columns of $M_u$ (set of second indexes of its elements) of $M_u$ is denoted as $R(u)$
of size $|R(u)|$.

For linear algebra terminology and definitions, see \cite{Ln}, \cite{Ma}.

\section{Some properties of row monomial matrices}

\begin{rem} \label{r1}
The invertible matrix $M_a$ does not change the number of units of every column of $M_u$ in its image of the product $M_aM_u$.

Every unit in the product $M_uM_a$ is the product of two units, first unit from nonzero column of $M_u$ 
and second unit from a row with one unit of $M_a$.

\end{rem}

\begin{rem} \label{r4}

The columns of the matrix $M_uM_a$ are obtained by permutation of columns $M_u$.
Some columns can be merged (units of columns are moved along
row to a common column) with $|R(ua)|<|R(u)|$.

The rows of the matrix $M_aM_u$ are obtained by permutation of rows of the matrix $M_u$.
Some of these rows may disappear and replaced by another rows of $M_u$.

\end{rem}

\begin{lem} \label{l1}

The number of nonzero columns $|R(b)|$ is equal to the rank of $M_b$.

\centerline{$|R(ua)| \leq |R(u)|$} and

\centerline{$R(au) \subseteq R(u)$.}

For invertible matrix $M_a$ we have $R(au)=R(u)$ and $|R(ua)|=|R(u)|$.

 Nonzero columns of $M_{ua}$ have units also in $M_a$.

\end{lem}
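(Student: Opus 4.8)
The plan is to dispatch the five assertions of the lemma one at a time, each of them reducing to the combinatorics of row monomial matrices recorded in Remarks~\ref{r1} and~\ref{r4} together with the identity $M_uM_v=M_{uv}$. For the rank equality I would argue on columns: since $M_b$ realizes the map $\mathbf{q}_i\mapsto\mathbf{q}_ib$, each row carries exactly one unit, so the $n$ units are distributed among the columns of $M_b$ with the supports of distinct columns pairwise disjoint. Hence the nonzero columns are linearly independent (nonzero vectors with pairwise disjoint supports), there are exactly $|R(b)|$ of them, and the zero columns add nothing; therefore $\mathrm{rank}(M_b)=|R(b)|$. Equivalently, the distinct rows of $M_b$ are the distinct standard basis vectors $e_j$ with $j\in R(b)$, which again gives rank $|R(b)|$.

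Next, the bounds $|R(ua)|\le|R(u)|$ and $R(au)\subseteq R(u)$ follow from writing $M_{ua}=M_uM_a$ and $M_{au}=M_aM_u$ and invoking Remark~\ref{r4}: right multiplication by $M_a$ permutes the columns of $M_u$ and may merge some of them, which can only decrease the number of nonzero columns; left multiplication by $M_a$ reorders the rows of $M_u$ and may delete some, so every column that is nonzero in $M_aM_u$ was already nonzero in $M_u$. The inclusion can also be read off directly from $R(au)=A(au)=(Aa)u\subseteq Au=R(u)$. For the invertible case the key observation is that an invertible row monomial matrix is a permutation matrix, so $a$ acts bijectively on the states; then $Aa=A$ yields $R(au)=(Aa)u=Au=R(u)$, and right multiplication by a permutation matrix merely permutes the columns of $M_u$ with no merging, so $|R(ua)|=|R(u)|$.

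Finally, the last assertion is exactly the second statement of Remark~\ref{r1}: every unit of the product $M_uM_a$ is the product of a unit of $M_u$ with a unit of $M_a$ lying in the same column, so a column that is nonzero in $M_{ua}$ must already contain a unit in $M_a$, i.e. $R(ua)\subseteq R(a)$. There is no genuine obstacle in the proof — the lemma is pure bookkeeping on row monomial matrices — and the only point needing a little care is keeping the dichotomy ``merging decreases the column count / a column permutation preserves it'' honest, which is precisely the content of Remark~\ref{r4} that we are free to use.
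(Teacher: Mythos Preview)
Your proposal is correct and follows essentially the same route as the paper: both arguments invoke Remarks~\ref{r1} and~\ref{r4} for the column--merging and row--permutation descriptions of right and left multiplication, and both derive the last clause from $R(ua)\subseteq R(a)$. The only cosmetic differences are that the paper phrases the rank equality via a square submatrix with nonzero determinant (one chosen row per nonzero column) rather than your disjoint--support argument, and that you add the alternative one--line computation $R(au)=(Aa)u\subseteq Au=R(u)$ and spell out the permutation--matrix reasoning in the invertible case more explicitly than the paper does.
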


\begin{proof}
The matrix $M_b$ has submatrix with nonzero determinant having only one unit in every 
row and in every nonzero column.
Therefore $|R(b)|$ is equal to the rank of $M_b$.

The matrix $M_a$ in the product $M_uM_a$ shifts column of
$M_u$ to columns of $M_uM_a$ without
changing the column itself by Remark \ref{r4} or merging.
some columns of $M_u$.

In view of possible merged columns, $|R(ua)|\leq |R(u)|$.

Some rows of $M_u$ can be replaced in $M_aM_u$ by another row and therefore some rows 
from $M_u$ may be changed, but zero columns of $M_u$ remain in $M_aM_u$ (Remark 1).

Hence $R(au) \subseteq R(u)$ and $|R(ua)| \leq |R(u)|$.

For invertible matrix $M_a$ we have $R(au)= R(u)$  and $|R(ua)|=|R(u)|$.

Nonzero columns of $M_{ua}$ have units also in $M_a$ in view of $R(ua) \subseteq R(a)$.

\end{proof}

\begin{cor}  \label{c1}
All matrices of prefixes of synchronizing row monomial matrix $s$ also have 
at least one unit in nonzero column of $s$.
\end{cor}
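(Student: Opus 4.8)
The plan is to deduce the statement directly from Lemma~\ref{l1}. Since $s$ is synchronizing, $|As|=1$, so by Lemma~\ref{l1} the rank of $M_s$ equals $|R(s)|=1$ and $M_s$ has a single nonzero column; by the enumeration fixed above this is column $1$, the column of the reset state ${\bf q}$. Every prefix of $s$ corresponds to a factorization $s=uv$, and using $M_uM_v=M_{uv}$ the task is to show that this distinguished column is already nonzero in $M_u$.

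First I would record the easy half. Writing $s=uv$ and applying the last assertion of Lemma~\ref{l1} to the product $M_uM_v$, the nonzero columns of $M_{uv}$ have units also in $M_v$, so column $1$ lies in $R(v)$; the property is thus immediate for the tail factor $v$. For the head factor $u$ I would pass through Remark~\ref{r4}: the columns of $M_uM_v$ are a permutation, with possible mergings, of the columns of $M_u$, and since $M_u$ is row monomial each of its $n$ rows places its unique unit in one of the columns indexed by $R(u)$. Because $M_s$ has one nonzero column carrying all $n$ units, that column is obtained by merging the whole family of nonzero columns of $M_u$; tracking a single unit of $M_u$ from a column $j\in R(u)$ through the product, as in Remark~\ref{r1}, identifies where the merge lands, namely ${\bf q}_j v={\bf q}$ for every $j\in R(u)$.

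The real obstacle is exactly this last identification: it forces $R(u)$ to be collapsed \emph{onto} the nonzero column of $s$ by the action of $v$, which a priori is weaker than having column $1$ itself nonzero in $M_u$. To close this I would iterate the two monotonicity facts of Lemma~\ref{l1}, $R(uv)\subseteq R(v)$ together with $|R(uv)|\le|R(u)|$, along the entire chain of prefixes of $s$: then $|R(\cdot)|$ is nonincreasing, starting at $n$ for the empty prefix and terminating at $1$ at $s$, and the surviving nonzero column at each stage is the one that the corresponding suffix collapses onto ${\bf q}$, which pins it to column $1$ throughout; making this step fully rigorous is the delicate part. Once it is in place, the rest is the column/row bookkeeping already packaged in Remarks~\ref{r1} and~\ref{r4}, so the write-up should be short.
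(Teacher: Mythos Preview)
Your argument for the tail factor $v$ is correct and is exactly what Lemma~\ref{l1} yields: from $R(uv)\subseteq R(v)$ the single nonzero column of $M_s$ lies in $R(v)$ for every suffix $v$ of $s$. The difficulty you flag for the head factor is not a mere delicacy---the prefix version of the statement is false. Take the three-state automaton on $\{1,2,3\}$ with letters $a,b$ acting by $a\colon 1\mapsto 2,\ 2\mapsto 3,\ 3\mapsto 1$ and $b\colon 1\mapsto 1,\ 2\mapsto 2,\ 3\mapsto 1$. The word $s=baab$ is a synchronizing word of minimal length with $As=\{1\}$, yet the prefix $u=ba$ acts by $1\mapsto 2,\ 2\mapsto 3,\ 3\mapsto 2$, so column~$1$ of $M_{ba}$ contains no unit. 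Your proposed iteration of $|R(uv)|\le|R(u)|$ together with $R(uv)\subseteq R(v)$ cannot pin the surviving nonzero column of a prefix to column~$1$, because it need not be column~$1$.

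The paper offers no separate proof; the statement is placed as an immediate corollary of Lemma~\ref{l1}. The only inclusion there relating $R$ of a product to $R$ of a factor is $R(au)\subseteq R(u)$, which gives precisely the suffix conclusion you established and says nothing about left factors. So either ``prefixes'' in the corollary is a slip for ``suffixes'', or the corollary is simply in error. Your instinct that the prefix case does not follow from the available tools was right; the correct response is to record the counterexample rather than to try to close the gap.
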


\begin{cor}  \label{c3}
The invertible matrix $M_a$ keeps the number of units of any column of $M_u$
 in corresponding column of the product $M_aM_u$.
\end{cor}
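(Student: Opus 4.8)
The plan is to reduce the statement to the row-permutation description of $M_aM_u$ already recorded in Remark \ref{r4}, using that an invertible row monomial matrix is exactly a permutation matrix. First I would note that invertibility of $M_a$ together with Lemma \ref{l1}, where $|R(a)|$ equals the rank of $M_a$ and hence $|R(a)|=n$, forces every column of $M_a$ to contain a unit; since every row already contains exactly one unit, $M_a$ is a permutation matrix, say the matrix of a permutation $\pi$ of the index set $\{1,\dots,n\}$.

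Next I would compute the generic entry of the product: $(M_aM_u)_{i,k}=\sum_j (M_a)_{i,j}(M_u)_{j,k}=(M_u)_{\pi(i),k}$, because $(M_a)_{i,j}=1$ for the single index $j=\pi(i)$ and is $0$ otherwise. Hence row $i$ of $M_aM_u$ is precisely row $\pi(i)$ of $M_u$, that is, $M_aM_u$ is obtained from $M_u$ by the bijective permutation $\pi$ of its rows, with no row merged or lost; this is exactly the invertible case of Remark \ref{r4}.

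Finally, a bijective permutation of the rows of a matrix leaves the multiset of entries of each fixed column unchanged; in particular, for every $k$, the number of units in column $k$ of $M_aM_u$ equals the number of units in column $k$ of $M_u$. This is the assertion, and it coincides with the content of Remark \ref{r1}. The only step that needs genuine care is ruling out the ``rows may disappear'' alternative of Remark \ref{r4}, which is handled by the permutation-matrix observation of the first step; the remainder is routine.
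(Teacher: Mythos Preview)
Your argument is correct and follows essentially the same route the paper takes: the paper does not give a separate proof of Corollary \ref{c3} but simply records it as an immediate consequence of the observations in Remark \ref{r1} and Remark \ref{r4}, which is exactly what you unpack. Your added justification that an invertible row monomial matrix is a permutation matrix (via $|R(a)|=\mathrm{rank}\,M_a=n$ from Lemma \ref{l1}) is the right way to rule out the ``rows may disappear'' alternative, and makes explicit what the paper leaves implicit.
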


\subsection{Necessary conditions of the operation of summation in the class of row monomial matrices}

\begin{lem}\label{lam}
Suppose that for row monomial matrices $M_i$  and $M$
\begin{equation}
M =\sum_{i=1}^k\lambda_i M_i. \label{lm}
\end{equation}
with coefficients $\lambda$ from $Q$.

Then the sum $\sum^k_{i=1}\lambda_i =1$ and the sum $S_j$ of values in every row $j$
of the sum in (\ref{lm}) also is equal to one.

If $\sum^k_{i=1}\lambda_iM_i=0$  then $\sum_{i=1}^k \lambda_i=0$ and $S_j=0$
for every $j$ with $M_u=0$.

If the sum $\sum^k_{i=1}\lambda_i$ in every row is not unit [zero] then
$\sum_{i=1}^k\lambda_i M_i$ is not a row monomial matrix.
\end{lem}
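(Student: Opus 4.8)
The plan is to exploit the defining feature of a row monomial matrix: every one of its rows sums to exactly $1$. Since each $M_i$ in (\ref{lm}) is row monomial, for any fixed row index $j$ the entries of row $j$ of $M_i$ add up to $1$. First I would compute the row-$j$ sum $S_j$ of the left-hand side directly: writing $M=\sum_{i=1}^k\lambda_i M_i$ and summing the entries across row $j$, linearity lets me pull the sum over columns inside, so $S_j=\sum_{i=1}^k\lambda_i\cdot(\text{row-}j\text{ sum of }M_i)=\sum_{i=1}^k\lambda_i\cdot 1=\sum_{i=1}^k\lambda_i$. This single computation simultaneously shows that every row sum $S_j$ of $M$ equals the common value $\sum_{i=1}^k\lambda_i$, and — because $M$ is assumed row monomial, hence has row sums equal to $1$ — that this common value is $1$. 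That disposes of the first assertion.

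For the second assertion, the same computation applies verbatim with the left-hand side being the zero matrix instead of a row monomial matrix: the row-$j$ sum of $0$ is $0$, so $\sum_{i=1}^k\lambda_i=S_j=0$ for every row $j$ (the hypothesis ``$M_u=0$'' in the statement I read simply as: the linear combination vanishes). No new idea is needed here; it is the degenerate case of the first part.

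The third assertion is the contrapositive of the first two, so it requires no separate argument: if the scalars $\lambda_i$ are such that the row sums of $\sum_{i=1}^k\lambda_i M_i$ fail to all equal $1$ (respectively fail to all equal $0$), then by the computation above $\sum_{i=1}^k\lambda_i M_i$ cannot be a row monomial matrix (respectively cannot be the zero matrix), since any row monomial matrix has all row sums equal to $1$ and the zero matrix has all row sums equal to $0$. I would phrase this as an immediate remark rather than a new derivation.

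I do not anticipate a genuine obstacle here; the lemma is essentially the observation that ``row sums equal $1$'' is a linear functional condition preserved under affine combinations. The only point needing mild care is bookkeeping: making explicit that the interchange of the (finite) sum over $i$ with the sum over columns is just finite-dimensional linearity, and being precise that the argument uses only that each $M_i$ has constant row sum $1$ and that the target matrix (row monomial, or zero) also has constant row sums, so the constant on the left must match the constant on the right. If anything, the subtle part is interpreting the slightly terse statement — in particular reading the clause about $S_j=0$ ``for every $j$ with $M_u=0$'' as the assertion that the vanishing of the combination forces $\sum_i\lambda_i=0$ — but once read that way the proof is a two-line calculation plus its contrapositive.
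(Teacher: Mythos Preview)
Your proposal is correct and follows essentially the same approach as the paper: both arguments exploit that every row of a row monomial matrix sums to $1$, compute the row-$j$ sum of $\sum_i\lambda_iM_i$ as $\sum_i\lambda_i$, and match it against the row sum of the target ($1$ for $M$, $0$ for the zero matrix), with the third assertion read as the contrapositive. The only cosmetic difference is that the paper first redundantly derives $\sum_i\lambda_i=1$ from the total entry count ($n=n\sum_i\lambda_i$) before repeating the row-by-row computation you give directly.
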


\begin{proof}
The nonzero matrices $M_i$ have $n$ cells with unit in the cell. Therefore, 
the sum of values in all cells of the matrix $\lambda_i M_i$ is $n \lambda_i$.

For nonzero $M$ the sum is $n$. So one has in view of
$M =\sum_{i=1}^k\lambda_i M_i$

\centerline {$n=n\sum_{i=1}^k \lambda_i$, whence $1 =\sum_{i=1}^k \lambda_i$.}

Let us consider the row $j$ of matrix $M_j$ in (\ref{lm}) and let  $1_j$ be unit in the row $j$.
The sum of values in a row of the sum (\ref{lm}) is equal to unit in the row of $M$.
So $1 =\sum_{i=1}^k \lambda_i1_i=\sum_{i=1}^k \lambda_i$.

$\sum_{i=1}^k\lambda_i M_i=0$ implies $S_j=\sum_{i=1}^k \lambda_i1_i=\sum_{i=1}^k \lambda_i=0$ for  every row $j$.

If the matrix $M=\sum_{i=1}^k\lambda_i M_i$ is a matrix of word or zero matrix then
$\sum^k_{i=1}\lambda_i \in \{0, 1\}$.
If $\sum^k_{i=1}\lambda_i\not\in \{0, 1\}$ or the sum  in ${0, 1}$ is not the same in every row 
then we have opposite case and the matrix does not belong  to the set of row monomial matrix.
\end{proof}

\begin{rem} \label{r2}
The set of row monomial matrices  with considered summation operation and zero matrix together 
with multiplication generate a space.
\end{rem}

\subsection{Linear independence and dimension of the space.}

\begin{lem}  \label {v3} 

 The set $V$ of all row monomial $n\times k$-matrices
(or $n\times n$-matrices with zeros in fixed $n-k$ columns for $k\leq n$) has at most
$n(k-1)+1$ linear independent matrices.
 \end{lem}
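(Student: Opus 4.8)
The plan is to realize $V$ inside an affine (not linear) subspace of the space of all $n\times k$-matrices over $Q$ cut out by the row-sum conditions, and then to invoke the elementary fact that the linear span of any subset of an affine subspace $v_0+\mathcal{W}$ has dimension at most $\dim\mathcal{W}+1$.

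First I would observe that every row monomial $n\times k$-matrix $M=(m_{i,j})$ has exactly one unit in each row, so it satisfies the $n$ linear equations $\sum_{j=1}^{k} m_{i,j}=1$ for $i=1,\dots,n$ (the same observation underlies Lemma \ref{lam}). Hence $V$ is contained in the affine set $\mathcal{A}$ of all matrices whose every row sums to $1$. These $n$ equations act on pairwise disjoint blocks of the $nk$ entries, so they are linearly independent; therefore the associated homogeneous subspace $\mathcal{W}$ of matrices all of whose rows sum to $0$ has dimension $nk-n=n(k-1)$, and $\mathcal{A}=M_0+\mathcal{W}$ for any fixed $M_0\in V$ — say the matrix whose every row carries its unit in column $1$.

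Then the conclusion is immediate: every element of $V$ has the form $M_0+w$ with $w\in\mathcal{W}$, so the linear span of $V$ is contained in the subspace spanned by $M_0$ together with $\mathcal{W}$, which has dimension at most $1+n(k-1)$. Thus $V$ contains at most $n(k-1)+1$ linearly independent matrices. The variant with $n\times n$-matrices having zeros in a fixed set of $n-k$ columns reduces to this one by deleting those columns.

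The argument presents essentially no obstacle; the two points deserving a word of care are the rank of the constraint system (immediate, since the defining equations involve disjoint sets of coordinates) and the passage from the affine hull to the linear span, which is exactly where the extra ``$+1$'' in the bound $n(k-1)+1$ originates, namely from adjoining the single matrix $M_0$. Should one want to see that the estimate is sharp, it would suffice to exhibit $n(k-1)+1$ row monomial matrices whose pairwise differences span $\mathcal{W}$ — for example $M_0$ together with, for every row $i$ and every column $j\ge 2$, the matrix obtained from $M_0$ by moving the unit of row $i$ from column $1$ to column $j$ — but the statement as given requires only the upper bound.
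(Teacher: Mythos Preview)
Your proof is correct and takes a genuinely different route from the paper's. The paper constructs an explicit basis: the matrix $K$ with all units in column $k$, together with the $n(k-1)$ matrices $V_{i,j}$ ($1\le i\le n$, $1\le j<k$) each carrying a single unit in cell $(i,j)$ and the remaining $n-1$ units in column $k$; it then checks by hand that every row monomial matrix is a linear combination of these and that the set is linearly independent. You instead embed $V$ in the affine flat cut out by the $n$ independent row-sum equations and read off the bound $n(k-1)+1$ as $\dim\mathcal{W}+1$. Your argument is shorter and more conceptual for the upper bound, which is all the lemma asserts; the paper's construction has the advantage of yielding sharpness at the same time by exhibiting an explicit basis lying inside $V$, whereas you relegate sharpness to an optional remark (and the basis you sketch there is essentially the paper's with the roles of column $1$ and column $k$ interchanged).
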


\begin{proof}
Let us consider distinct $n\times k$-matrices of word with at most only one nonzero cell outside the last nonzero column $k$.

Let us begin from the matrices $V_{i,j}$ with unit in $(i,j)$ cell ($j<k$) and units in ($m,k$) cells for all $m$ except $i$.
The remaining cells contain zeros.
So we have $n-1$ units in the $k$-th column and only one unit in remaining $k-1$ columns of the matrix $V_{i,j}$.
Let the matrix $K$ have units in the $k$-th column and zeros in the other columns.
There are $n(k-1)$ matrices $V_{i,j}$. Together with $K$ they belong to the set $V$.
So we have $n(k-1)+1$ matrices. For instance,

\begin{picture}(0,40)
\end{picture}
$V_{1,1}=\left(
\begin{array}{cccccccc}
  1 & 0 & 0 & . & . & 0  \\
  0 & 0 & 0 & . & . & 1  \\
  0 & 0 & 0 & . & . & 1  \\
  . & . & . & . & . & .  \\
  0 & 0 & 0 & . & . & 1  \\
  0 & 0 & 0 & . & . & 1  \\
\end{array}
\right)$
\begin{picture}(4,40)
\end{picture}
$V_{3,2}=\left(
\begin{array}{cccccccc}
  0 & 0 & 0 & . & . & 1  \\
  0 & 0 & 0 & . & . & 1  \\
  0 & 1 & 0 & . & . & 0  \\
  . & . & . & . & . & .  \\
  0 & 0 & 0 & . & . & 1  \\
  0 & 0 & 0 & . & . & 1  \\
\end{array}
\right)$
\begin{picture}(4,40)
\end{picture}
$K=\left(
\begin{array}{cccccccc}
  0 & 0 & 0 & . & . & 1 \\
  0 & 0 & 0 & . & . & 1 \\
  0 & 0 & 0 & . & . & 1 \\
  . & . & . & . & . & . \\
  0 & 0 & 0 & . & . & 1 \\
  0 & 0 & 0 & . & . & 1 \\
\end{array}
\right)$

 The first step is to prove that the matrices $V_{i,j}$ and $K$ generate the space with the set $V$.
For arbitrary matrix $T$ of word from $V$ for every $t_{i,j} \neq 0$ and $j<k$,
let us consider the matrices $V_{i,j}$ with unit in the cell $(i,j)$ and the sum of them $\sum V_{i,j}=Z$.

The first $k-1$ columns of $T$ and $Z$ coincide.
   Hence in the first $k-1$ columns of the matrix $Z$ there is at most only one unit in any row.
 Therefore in the cell of $k$-th column of $Z$ one can find only value of $m$ or $m-1$.
The value of $m$ appears if there are only zeros
in other cells of the considered row. Therefore $\sum V_{i,j} - (m-1)K=T$.
Thus every matrix from the set $V$ is a span of $(k-1)n +1$ matrices from $V$.

It remains now to prove that the set of matrices $V_{i,j}$ and $K$ is a set of linear independent matrices.

If one excludes a certain matrix $V_{i,j}$ from the set of these matrices, then it is impossible
 to obtain a nonzero value in the cell $(i,j)$ and therefore to obtain the matrix $V_{i,j}$.
So the set of matrices $V_{i,j}$ is linear independent.
Every non-trivial linear combination of the matrices $V_{i,j}$ equal to a matrix of word has at
 least one nonzero element in the first $k-1$ columns.
Therefore, the matrix $K$ could not be obtained as a linear combination of the matrices $V_{i,j}$.
Consequently the set of matrices $V_{i,j}$ and $K$ forms a basis of the set $V$.
\end{proof}

\begin{cor}  \label {c2}
The set of all row monomial $n \times(n-1)$-matrices of words has $(n-1)^2$ linear independent matrices.

The set of row monomial $n\times n$-matrices has at most $n(n-1)+1$ linear independent matrices.

There are at most $n+1$ row monomial linear independent matrices of
words in the set of matrices with 2 nonzero columns
and at most $n$  linear independent matrices  in the set of matrices
with one common nonzero column.

 \end{cor}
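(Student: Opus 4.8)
The plan is to read off each of the four assertions from Lemma~\ref{v3} by substituting the appropriate value of the parameter $k$, using not merely the upper bound $n(k-1)+1$ but the sharper fact established in its proof: the matrices $V_{i,j}$ together with $K$ form an actual basis of the set $V$, so that the dimension of the span equals $n(k-1)+1$.

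First I would take $k=n-1$. An $n\times(n-1)$-matrix of a word becomes, after adjoining the one missing zero column, a row monomial $n\times n$-matrix with zeros in a fixed column, i.e.\ an element of the set $V$ of Lemma~\ref{v3} for $k=n-1$; conversely every such matrix arises this way by deleting that zero column. Hence the maximal number of linearly independent matrices in this family is exactly the size $n(k-1)+1=n(n-2)+1=(n-1)^2$ of the basis produced in the lemma, which is the first statement. Putting instead $k=n$ in Lemma~\ref{v3} immediately yields the bound $n(n-1)+1$ for the set of all row monomial $n\times n$-matrices, the second statement.

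For the last two statements I would specialise to small $k$. Matrices ``with $2$ nonzero columns'' are row monomial $n\times n$-matrices whose nonzero entries lie in one fixed pair of columns, that is, the set $V$ of Lemma~\ref{v3} with $k=2$, so the bound is $n(2-1)+1=n+1$. Matrices ``with one common nonzero column'' are row monomial $n\times n$-matrices with zeros outside one fixed column; the only matrix of that shape is the matrix $K$ of the lemma (all units in that column), so this subspace is one-dimensional, while letting the common column range over all $n$ columns produces $n$ such matrices with pairwise disjoint supports, whence at most $n$ (indeed exactly $n$) linearly independent ones.

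There is no real obstacle here; the only point needing attention is to check, in each clause, that the family being described is precisely ``row monomial $n\times n$ with zeros in a \emph{fixed} set of $n-k$ columns'', so that Lemma~\ref{v3} applies verbatim, and — for the first clause — to invoke the explicit basis of that lemma rather than just its inequality, in order to obtain the equality $(n-1)^2$ instead of a mere upper bound.
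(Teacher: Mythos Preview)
Your proposal is correct and matches the paper's approach: the corollary is stated in the paper with no separate proof, being an immediate specialisation of Lemma~\ref{v3} to $k=n-1$, $k=n$, $k=2$, and (for the last clause) the rank-one case, exactly as you spell out. Your extra care in the one-column case---observing that the formula gives $1$ for a fixed column and hence $n$ when the column is allowed to vary---fills in the only step the paper leaves implicit.
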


\begin{cor}  \label {cs}

There exists a sequence  of length at most  $n(n-1)+1$ of distinct subspaces ordered by includion 
of row monomial  matrices for $n\times n$ automaton.
Zero matrix also presents a subspace, say, first int he sequence. Hence we have a sequence of
non-trivial  subspaces ordered by includion of length at most  $n(n-1)$.

\end{cor}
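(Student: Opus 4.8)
The plan is to read this off directly from the dimension bound of Corollary~\ref{c2}. By Remark~\ref{r2} the row monomial $n\times n$-matrices, the zero matrix and the prescribed summation generate a linear space $\mathcal{V}$, and Corollary~\ref{c2} gives $\dim\mathcal{V}\le n(n-1)+1$.

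Next I would use the elementary fact that in a space of finite dimension $d$, any chain of pairwise distinct subspaces each properly contained in the next has strictly increasing dimensions, since a proper inclusion of subspaces raises the dimension by at least one; as these dimensions are integers between $0$ and $d$, such a chain has at most $d+1$ terms. Specialising to $d=\dim\mathcal{V}\le n(n-1)+1$ and taking the trivial subspace generated by the zero matrix as the smallest term yields the length bound claimed for the whole chain; discarding $\{0\}$ removes exactly one term and leaves a chain of non-trivial subspaces that is shorter by one. In the \v{C}erny application the subspaces one places in this chain are the distinct spans of the matrices of prefixes of the synchronizing word, which indeed grow along such a chain.

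I do not expect a genuine obstacle here: the whole content is already carried by Corollary~\ref{c2}, and the only point requiring care is the bookkeeping of the additive constants — whether the chain is counted up to $\mathcal{V}$ itself or up to a proper subspace, and whether the zero subspace is included — which is precisely what separates the two forms, $n(n-1)+1$ and $n(n-1)$, of the statement.
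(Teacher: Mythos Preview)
Your approach is exactly what the paper intends: the corollary is stated without proof, as an immediate consequence of the dimension bound $n(n-1)+1$ from Corollary~\ref{c2}, and your chain-of-subspaces argument is the standard way to read it off.

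There is, however, a genuine arithmetic slip. Your own computation gives at most $d+1$ terms in a chain, and with $d=\dim\mathcal{V}\le n(n-1)+1$ this yields at most $n(n-1)+2$ subspaces including $\{0\}$, hence at most $n(n-1)+1$ non-trivial ones --- one more than the corollary asserts in each case. You correctly flag the additive constants as the only delicate point, but you then claim the bookkeeping ``yields the length bound claimed'' without actually carrying it out; when you do, the numbers do not match. This off-by-one is not a defect of your method but of the corollary as stated in the paper: with $\dim\mathcal{V}=n(n-1)+1$ a maximal chain of subspaces genuinely has $n(n-1)+2$ members, so either the paper is silently counting chain \emph{length} as the number of strict inclusions rather than the number of subspaces, or the stated bounds are simply one too small. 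You should say so explicitly rather than asserting agreement.
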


\subsection{The equation with unknown matrix $L$}

The row monomial solution $L$ of the equation
\begin{equation}
M_uL=M_s \label{ux}
\end{equation}
for matrix $M_s$ with all units in one column $\bf q$, row monomial matrix $M_u$, 
words $u, s \in \Sigma$ and $As=\bf q$. 
$M_u$ must have some units in the column  $q$.

\begin{df}
If the set of cells with units in
the column $\bf q$ of the matrix $M_v$ is a subset of the
analogous set of the matrix $M_u$ then we write

\centerline{$M_v \sqsubseteq_q M_u$}

\end{df}

\begin{lem} \label{l5}  
Every equation $M_uL=M_s$  has a solutions $L$ with at  least  $|R(u)|>0$ units in column $q$,
Every nonzero column $j$ of $M_u$ corresponds a unit in the cell $j$ of column $q$ of matrix L.

For solution $L$ with only $|R(u)|$ units in column $q$ (a minimal solution) $L \sqsubseteq_q N$
for any other solution $N$ of (\ref{ux}).

\end{lem}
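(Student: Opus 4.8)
The plan is to unfold the matrix equation (\ref{ux}) cell by cell, exploiting that $M_u$ is row monomial. I would represent $M_u$ by its column function $c$, so that row $i$ of $M_u$ has its single unit in column $c(i)$ and $R(u)$ is exactly the image of $c$; in particular $|R(u)|\ge 1$ since $M_u\neq 0$. Then $(M_uL)_{i,k}=\sum_j (M_u)_{i,j}L_{j,k}=L_{c(i),k}$, so the equation $M_uL=M_s$ says precisely that $L_{c(i),{\bf q}}=1$ and $L_{c(i),k}=0$ for $k\neq{\bf q}$, for every row index $i$.

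Next I would observe that, because $L$ is row monomial, the vanishing conditions $L_{c(i),k}=0$ $(k\neq{\bf q})$ follow automatically once $L_{c(i),{\bf q}}=1$, since the unique unit of row $c(i)$ then already sits in column ${\bf q}$. Hence (\ref{ux}) is equivalent to the single demand that $L_{j,{\bf q}}=1$ for every $j\in R(u)$, the remaining $n-|R(u)|$ rows of $L$ being free. This delivers the first two assertions at once: a solution exists (take $L=M_s$, or any admissible filling of the free rows), every solution has a unit in cell $(j,{\bf q})$ for each nonzero column $j$ of $M_u$, and therefore has at least $|R(u)|>0$ units in column ${\bf q}$.

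For the minimality statement, I would take a solution $L$ whose column ${\bf q}$ carries exactly $|R(u)|$ units; by the reduction above these units then occupy precisely the cells $(j,{\bf q})$ with $j\in R(u)$ and no others — such an $L$ exists, obtained by placing the unit of each free row in a column different from ${\bf q}$. For any other solution $N$, the same reduction forces a unit in each of those cells $(j,{\bf q})$, $j\in R(u)$, so the set of cells with units in column ${\bf q}$ of $L$ is contained in the analogous set of $N$, which is exactly $L\sqsubseteq_{\bf q}N$.

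The argument is essentially a coordinate translation of the equation; the one step deserving care is the reduction, where row-monomiality of $L$ upgrades ``$|R(u)|$ forced units in column ${\bf q}$'' from a necessary condition to a sufficient one, and where one must check that the rows of $L$ indexed outside $R(u)$ really are unconstrained — this is what makes the notion of a minimal solution well defined and the containment $\sqsubseteq_{\bf q}$ sharp. The standing hypothesis that $M_u$ has units in column ${\bf q}$ is not actually needed in the computation; in the intended setting, with $u$ a prefix of the synchronizing $s$, it is automatic by Corollary \ref{c1}.
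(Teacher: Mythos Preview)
Your proof is correct and follows essentially the same approach as the paper: both arguments identify that the equation $M_uL=M_s$ forces a unit at cell $(j,{\bf q})$ of $L$ for every $j\in R(u)$, while the rows of $L$ indexed outside $R(u)$ are unconstrained, and both derive the minimality statement from this characterization. Your use of the column function $c$ and the explicit computation $(M_uL)_{i,k}=L_{c(i),k}$ makes the reduction more transparent than the paper's somewhat informal elementwise reasoning, but the underlying idea is identical.
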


\begin{proof}
The matrix $M_s$ of rank one has nonzero column of the state $\bf q$.

For every nonzero column $j$ of $M_u$ with elements $u_{i,j}=1$ and $s_{i,q}=1$ in the matrix $M_s$ 
the cell $(j,q)$ must have unit in the matrix $L$.
So the unit in the column $q$ of matrix $M_s$ is a product of
every unit from the column $j$ of $M_u$ and unit in the sell $j$ of column $q$ of $L$.

The set $R(u)$ of nonzero columns of $M_u$ corresponds the set of cells of the column $q$ 
with unit of minimal $L$ and  the minimal solution $L$ has in the column $q$ $|R(u)|$ units.

So to the column $q$ of every solution belong at least $|R(u)|$ units.
The remaining units of the solution $L_x$ belong to 
columns arbitrarily, but only one unit in a row.
The remaining cells obtain zero.

Lastly every solution $L$ is a row monomial matrix.

Zeros in the column $q$ of minimal $L$ correspond zero columns of  $M_u$.
Therefore for matrix $N$ such that $L \sqsubseteq_q N$
we have $M_uN=M_s$.
On the other hand, every solution $L$ must have units in cells of column $q$
 that correspond nonzero columns of $M_u$.

Thus minimal $L$ has $|R(u)|$ units in column $q$ and the equality $M_uL=M_uN=M_s$ is
equivalent to $L \sqsubseteq_q N$.

The matrix $M_u$ has set $R(u)$ of units in the column $q$ of minimal $L$.

\end{proof}

\section{Theorems}

\begin{thm} \label{t}

The deterministic complete $n$-state synchronizing automaton
$A$ with strongly connected underlying graph over alphabet $\Sigma$ has synchronizing word 
in $\Sigma$ of length at most $(n-1)^2$.

\end{thm}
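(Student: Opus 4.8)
The plan is to work with a shortest synchronizing word $s$, which the preliminaries already fix together with the state $\mathbf{q}$ such that $As=\mathbf{q}$; let $\ell$ be its length, write $s=\sigma_1\sigma_2\cdots\sigma_\ell$ as a path word, and let $p_i=\sigma_1\cdots\sigma_i$ be its prefixes ($p_0$ empty, $p_\ell=s$) with $v_i$ the complementary suffixes, so $M_{p_i}M_{v_i}=M_s$ by the multiplicativity $M_xM_y=M_{xy}$. Two reductions come first. If the first letter $\sigma_1$ acted as a permutation of the state set then $A\sigma_1=A$ and $\sigma_2\cdots\sigma_\ell$ would be a shorter synchronizing word; hence $|R(p_1)|\le n-1$, and then $|R(p_i)|\le n-1$ for every $i\ge 1$ by Lemma~\ref{l1}. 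By Corollary~\ref{c1} every $M_{p_i}$ has a unit in the fixed column $\mathbf{q}$, so $\mathbf{q}\in R(p_i)$ throughout. The goal is to show $\ell\le(n-1)^2$.

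The heart is a dimension count on the nested subspaces $W_i$ generated by $\{M_{p_0},\dots,M_{p_i}\}$ (these are subspaces by Remark~\ref{r2}), $W_0\subseteq W_1\subseteq\cdots\subseteq W_\ell$. First I would prove the inclusions are strict. If $M_{p_{i+1}}\in W_i$, multiply the resulting linear relation on the right by $M_{v_{i+1}}$: using $M_xM_y=M_{xy}$ and Lemma~\ref{lam} one gets $M_s=\sum_{j\le i}\lambda_j M_{p_jv_{i+1}}$ with $\sum_{j\le i}\lambda_j=1$, an affine combination of matrices of words all strictly shorter than $s$. Reading off column $\mathbf{q}$, and using Lemma~\ref{l5} (each $M_{v_{i+1}}$ is a solution of $M_{p_{i+1}}L=M_s$, so $\sqsubseteq_q$ pins down its units in column $\mathbf{q}$) together with the rank-one shape of $M_s$, one forces one of the shorter words $p_jv_{i+1}$ to already send $A$ into $\mathbf{q}$ — contradicting minimality of $s$. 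Hence $\dim W_\ell\ge\ell+1$.

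It remains to bound $\dim W_\ell$ from above by $(n-1)^2+1$. A direct use of Corollary~\ref{c2} for $n\times n$ row monomial matrices only gives $\dim W_\ell\le n(n-1)+1$, hence $\ell\le n(n-1)$, short of the target by exactly $n-1$. The extra saving should come from combining the two structural facts noted above: $W_0=\langle I\rangle$ accounts for one dimension, while all further generators $M_{p_i}$, $i\ge1$, have rank at most $n-1$ with the \emph{same} column $\mathbf{q}$ always nonzero. Forgetting column $\mathbf{q}$ sends $M_s$ to $0$ and sends each $M_{p_i}$ to a matrix built from zero rows and rows of $I$; the span of such images should be controlled by Corollary~\ref{c2} in its $n\times(n-1)$ form, giving dimension at most $(n-1)^2$, and restoring the lost direction $\langle I\rangle$ yields $\dim W_\ell\le(n-1)^2+1$, i.e. $\ell\le(n-1)^2$.

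I expect the main obstacle to be exactly this last reconciliation. The prefix matrices do not share a common zero column: only $\mathbf{q}$ is guaranteed to be nonzero in all of them, while their zero columns move as $i$ varies, so Corollary~\ref{c2}'s bound for matrices with a fixed zero column is not applicable verbatim. Making the ``forget column $\mathbf{q}$'' step rigorous — either by choosing a basis of $W_\ell$ adapted to column $\mathbf{q}$, or by re-running the strictly-increasing-chain argument inside that quotient with the help of Lemma~\ref{l5} and the relation $\sqsubseteq_q$ — is the delicate point; the strictness of the chain, which rests on turning $M_{p_{i+1}}\in W_i$ into a shorter synchronizing word via Lemmas~\ref{lam} and~\ref{l5}, is the other place where the row monomial algebra set up in the preliminaries is essential.
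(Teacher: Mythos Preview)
Your approach and the paper's diverge at the very first move. The paper does \emph{not} study the chain of spans of the prefix matrices $M_{p_i}$; instead it works with the \emph{solutions} $L_i$ of $M_{p_i}L=M_s$ from Lemma~\ref{l5}. The point of switching to the $L_i$ is precisely the obstacle you flag at the end: the prefix matrices have no common zero column, so Corollary~\ref{c2} in its $n\times(n-1)$ form does not apply to them. By contrast, Lemma~\ref{l5} says that in each $L_i$ only $|R(p_i)|$ units in column~$\mathbf q$ are forced, and the remaining $n-|R(p_i)|\ge 1$ units may be placed freely; the paper places them all in columns $1,\dots,n-1$, so every $L_i$ lives in the $n\times(n-1)$ block and the $(n-1)^2$ bound is available without any quotient construction. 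The paper then argues linear independence of the $L_i$ directly by using one free unit per matrix to occupy a private cell. So the mechanism that buys the extra $n-1$ in the paper is the passage from prefixes to solutions, not a ``forget column $\mathbf q$'' quotient.

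Your own route also has a genuine gap earlier, at the strictness of the chain. From $M_{p_{i+1}}\in W_i$ you correctly obtain, via right multiplication and Lemma~\ref{lam}, an affine relation $M_s=\sum_{j\le i}\lambda_j M_{p_jv_{i+1}}$. But the inference ``hence some $p_jv_{i+1}$ already synchronizes to $\mathbf q$'' is not valid: a rank-one row monomial matrix can be an affine combination of row monomial matrices none of which has rank one. For $n=3$, writing a map $f$ as the tuple $(f(1),f(2),f(3))$ and taking $M_s=(1,1,1)$, one checks
\[
M_s \;=\; \tfrac{2}{3}\bigl[(1,1,2)+(1,2,1)+(2,1,1)\bigr]\;-\;\tfrac{1}{3}\bigl[(2,2,1)+(2,1,2)+(1,2,2)\bigr],
\]
an affine combination (coefficients sum to $1$) of six rank-$2$ row monomial matrices. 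Nothing in Lemma~\ref{l5} or the relation $\sqsubseteq_q$ rules this out for the particular words $p_jv_{i+1}$, so the minimality contradiction does not go through; were the strictness step valid as stated, it would already yield the bound $n(n-1)$ from Corollary~\ref{c2} alone, which would be a dramatic improvement over all known cubic bounds. Thus both the dimension-saving step you identify and the strictness step before it are real gaps, and neither is handled by the machinery you invoke.
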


Proof.
Let  synchronizing word $S$ have length at least $(n-1)^2$. 
We consider the set of solutions $L_i$ of $(n-1)^2-1$ first prefixes
 $R(u_i)$  of word $S$.

Assume that $|R(u)|>1$ in $(n-1)^2-1=n(n-2)$ first prefixes. In opposite case some
prefix $M_u$ is a synchronizing word.

All solutions $L_i$ (minimal and not minimal) have one unit in every row 
as monomial matrices and have at least $|R(u)|>1$ units in column $q$. 
The remaining units can be placed free by Lemma \ref{l5}. We restrict
this placing only to first $n-1$ columns.

By Corollary \ref{c2} of Lemma \ref{v3},  $n\times (n-1)$-matrix can allocate at most 
$n\times(n-2)+1=(n-1^2$ linear independent matrices. We consider below a possible 
way of allocation of  $n(n-2)$ matrices  $L_i$  of first prefixes of $S$ 
with $|R(u)|\geq 2$ and also matrix $S$.

For all  $n(n-2)$ cells in columns from two to $n-1$ we choose one unit
from free placing of  $L_i$ and fill by them all  $n(n-2)$ cells, one unit in one cell.
Each such unit of the matrix, if necessary, can be shifted along its column. 
The matrix $L_i$ remains to be solution of equation (\ref{ux}) for $M_i$.

It is desirable to start the process with matrices $L_i$ with large  $|R(u)|$.

Then the remaining units of every solutions $L_i$ are allocated in column $q$,
also every unit in its own row.
 So we obtain from $L_i$ corresponding not minimal solution.
Hence all new solutions $L_i$ are linear independent because all matrices 
have cells with its own  unit.

For the same reason, the matrix $M_s$ cannot be a linear combination of the matrices 
$L_i$, and therefore it can be added to the list of linearly independent $L_i$.

The solutions $L_i$ correspond sequence of $n(n-2)$ prefixes $u_i$ of matrices with $|R(u_i)|>1$. 
Corresponding  $n(n-2)$ $L_i$ are linear independent and number of them except $M_s$ is maximal.
Hence for  matrices of next words $u_i$ $|R(u_i)|=1$.

By Corollary \ref {cs}, there exists a sequence  of length at most  $n(n-1)$ 
of distinct non-trivial subspaces ordered by includion of row monomial  matrices
 for $n\times n$ automaton.

Consequently corresponding matrix $M_u$ of word $u_{(n-1)^2}$ of length $(n-1)^2$ 
(or earlier) is synchronizing.

\begin{thm}\label{t2}
The deterministic complete $n$-state synchronizing automaton
$A$ with underlying graph
over alphabet $\Sigma$ has synchronizing word in $\Sigma$ of length at most $(n-1)^2$.
\end{thm}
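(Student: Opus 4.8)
\medskip
\noindent\textbf{Proof proposal.}
The plan is to deduce Theorem~\ref{t2} from the strongly connected case, Theorem~\ref{t}, via the classical reduction to strongly connected graphs recalled in the Preliminaries \cite{Ce}. First I would locate the unique terminal strongly connected component of $\Gamma$: the condensation of $\Gamma$ into strongly connected components is acyclic, so it has a component $C$ with no outgoing edges, and since $A$ is synchronizing there is exactly one such $C$, because two states lying in distinct terminal components stay in their own component under every word and hence can never be merged. In particular $C$ is reachable from every state. Put $m=|C|$. If $m=n$ then $\Gamma$ is strongly connected and Theorem~\ref{t} applies directly, so assume $1\le m\le n-1$.

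Next I would produce a word $v$ with $Av\subseteq C$ and $|v|\le(n-m)^2$, flushing states into $C$ one at a time. As long as the current image $P$ still contains a state ${\bf p}$ outside $C$, choose a shortest word $x$ with ${\bf p}x\in C$; the states visited along such a shortest path are pairwise distinct and all but the last lie outside $C$, so $|x|\le n-m$. Since no edge leaves $C$ we have $Cx\subseteq C$, while $(P\setminus C)x$ has at most $|P\setminus C|$ elements, one of which, namely ${\bf p}x$, lies in $C$; hence $|Px\setminus C|\le|P\setminus C|-1$. As there are $n-m$ states outside $C$ to begin with, at most $n-m$ such rounds are needed, each appending at most $n-m$ letters, so $|v|\le(n-m)^2$.

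Then I would pass to the sub-automaton $B$ of $A$ carried by $C$: it is complete with $m$ states because $C$ is closed under all letters (completeness of $A$), its underlying graph is strongly connected because $C$ is a strongly connected component, and it is synchronizing because any reset word $w$ of $A$ has $|Aw|=1$, hence $|Cw|=1$, and $Cw\subseteq C$, so $w$ resets $B$ as well. By Theorem~\ref{t} (or directly, for $m\le2$) the automaton $B$ has a reset word $z$ with $|z|\le(m-1)^2$. Then $vz$ resets $A$, since $A(vz)=(Av)z\subseteq Cz$ and $|Cz|=1$, and
\centerline{$|vz|\le(n-m)^2+(m-1)^2\le\bigl((n-m)+(m-1)\bigr)^2=(n-1)^2$,}
the middle inequality holding because the omitted cross term $2(n-m)(m-1)$ is nonnegative.

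The substantive ingredient is Theorem~\ref{t}; the rest is the bookkeeping of the reduction. Accordingly, the step I expect to require the most care is the length estimate $|v|\le(n-m)^2$ --- checking that every flushing round strictly decreases $|P\setminus C|$ and costs at most $n-m$ letters --- together with the arithmetic that $(n-m)^2+(m-1)^2$ never exceeds $(n-1)^2$. Both are routine, but they are where a slip would most easily occur.
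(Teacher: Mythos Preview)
Your argument is correct and follows the same route as the paper: both deduce Theorem~\ref{t2} from Theorem~\ref{t} via the classical reduction to the strongly connected case attributed to~\cite{Ce}. The paper simply invokes that reduction in one line, while you spell it out explicitly (unique sink component, flushing bound $(n-m)^2$, then $(n-m)^2+(m-1)^2\le(n-1)^2$); the content is the same.
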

Follows from Theorem \ref{t} because the restriction for strongly connected graphs
can be omitted due to \cite{Ce}.

 \end{document}